\DeclareMathOperator{\MLP}{MLP}
\DeclareMathOperator{\GNN}{GNN}
\DeclareMathOperator{\softmax}{softmax}
\newtheorem{theorem}{Theorem}
\newtheorem{lemma}{Lemma}
\begin{document}

\title{A neural drift-plus-penalty algorithm for network power allocation and routing}

\author{Ahmed Rashwan, Keith Briggs, and Chris Budd
\thanks{This work has been submitted to the IEEE for possible publication. Copyright may be transferred without notice, after which this version may no longer be accessible}
\thanks{Ahmed Rashwan and Chris Budd are with the Department of Mathematical Sciences, University of Bath, Claverton Down, Bath BA2 7AY UK (email: ar3009@bath.ac.uk; mascjb@bath.ac.uk).}
\thanks{Keith Briggs is with BT Research, Adastral Park, Martlesham, Ipswich IP5 3RE UK (email: keith.briggs@bt.com).}
\thanks{We gratefully acknowledge the support of the EPSRC Programme Grant EP/V026259/1, 'The mathematics of deep learning'.}
}

\maketitle

\begin{abstract}
    The drift-plus-penalty method is a Lyapunov optimisation technique commonly applied to network routing problems. It reduces the original stochastic planning task to a sequence of greedy optimizations, enabling the design of distributed routing algorithms which stabilize data queues while simultaneously optimizing a specified penalty function. While drift-plus-penalty methods have desirable asymptotic properties, they tend to incur higher network delay than alternative control methods, especially under light network load. In this work, we propose a learned variant of the drift-plus-penalty method that can preserve its theoretical guarantees, while being flexible enough to learn routing strategies directly from a model of the problem. Our approach introduces a novel mechanism for learning routing decisions and employs an optimal transport–based method for link scheduling. Applied to the joint task of transmit-power allocation and data routing, the method achieves consistent improvements over common baselines under a broad set of scenarios.
\end{abstract}

\begin{IEEEkeywords}
Radio resource allocation, Graph Neural Networks, Back-pressure routing, Lyapunov optimisation, Optimal transport
\end{IEEEkeywords}

\section{Introduction}
Efficient routing is a fundamental challenge in communication networks, and has motivated a wide range of algorithmic approaches, including oblivious \cite{oblivious_routing}, randomized \cite{randomized_algs}, and distributed \cite{distr_routing} algorithms. The suitability of any such approach strongly depends on the specific routing scenario and application requirements.

The drift-plus-penalty (DPP) method provides a systematic way for designing routing algorithms that guarantee network stability while optimizing a chosen penalty function \cite{RA_book}. By leveraging Lyapunov optimisation, DPP reduces the problem of planning data routes to a simpler sequence of greedy optimisations. Using this framework, distributed routing schemes have been developed that dynamically select data routes without requiring prior knowledge of the underlying network \cite{BP_paper}, making them particularly effective in ad-hoc \cite{ad-hoc_BP} and device-to-device \cite{d2d_survey} environments. The Back-pressure (BP) algorithm \cite{bp_original_paper} is a classic instance of DPP, it is a simple, distributed method that forwards data based on queue differences between neighbouring nodes in multi-hop networks. While BP enjoys strong asymptotic guarantees, it is well known to incur higher delays than shortest-path methods in practice, especially under light network load \cite{adaptive_BP}. This drawback has motivated extensive work on BP-inspired algorithms that preserve throughput optimality while improving empirical performance, including hand-engineered variants \cite{sojourn-time, spbp} as well as machine learning–based approaches \cite{qlearning_BP, gnn_BP}. Among these, Graph Neural Networks (GNNs) provide a particularly promising direction.

GNNs are a class of machine learning models tailored for graph-structured data \cite{graph_networks, original_gnn_paper}. Their ability to exploit graph symmetries has led to strong results in a wide range of domains \cite{GNN_comb_optim, graphcast}. This makes them particularly well suited to communication problems such as radio resource allocation, where the underlying structure is inherently graphical \cite{wireless_gnn}. Importantly, GNNs exhibit an inductive bias toward modelling short-range dependencies \cite{gnn_bottleneck}, which aligns well with many of the central quantities in communications such as radio interference and channel conditions. These properties make GNNs a natural fit for learning distributed routing algorithms which can adapt to arbitrary network topologies and traffic demands.

In this work, we propose a GNN-based framework for learning DPP routing algorithms. We adopt a centralized-training, distributed-execution paradigm: the models are centrally trained offline using a simulation of the problem, but can later be deployed in real systems where execution can be carried out in a fully distributed manner \cite{QMIX, vdn}. While our approach can broadly apply to the class of stochastic network optimization problems described in \cite{RA_book}, we focus for clarity on the joint problem of transmit-power allocation and data routing. Our method combines two learned components: (i) a backlog function that drives routing decisions, and (ii) a DPP solver that jointly allocates transmit power and schedules links. By bounding the backlog function, we guarantee both network stability and penalty minimization. The solver, in turn, provides an efficient implementation of DPP by integrating GNN methods with Sinkhorn’s algorithm \cite{sinkhorn}, a highly efficient method for solving entropy-regularized optimal transport problems. This combination retains the theoretical guarantees of DPP while remaining flexible enough to handle arbitrary penalty functions and topologies. Importantly, the entire framework is fully differentiable, enabling unsupervised training directly from a model of the underlying network.

We evaluate the proposed method on random geometric networks under three penalty functions: fixed penalty, power consumption, and energy efficiency. Our results show that both the learned backlog function and the DPP solver outperform both classical algorithms as well as existing machine learning–based approaches, consistently achieving lower delay across a broad range of scenarios. \\

\textbf{Notation:}\; We extend vector norms $||\cdot||_p$ and inner-products to $N$-dimensional tensors by applying them to the flattened vector representation. Subscripts are used for tensor indexing, and a colon denotes slicing, with trailing colons omitted (e.g., $X_{i,:,:} = X_i$). We write $\mathbf{1}_C$ for the indicator of a condition $C$, and $\mathds{1} \in \mathbb{R}^n$ for an all-ones vector whose dimension is deduced from context. For a tensor $X$, the superscript $X^+$ denotes its element-wise positive components. We let $\mathbb{R}_+$ denote the non-negative reals and $\mathbb{R}_{++}$ the strictly positive reals. We parametrise our ML model using two sets of learned parameters: $\theta$ and $\phi$. We use subscripts to denote functions which are parametrised by either set of parameters. Similar to actor-critic methods in reinforcement learning \cite{ac_survey}, a separate objective functions is used for training each set of parameters, which will be specified later. 

\subsection{Contributions}
We summarise our main contributions below:
\begin{itemize}
    \item We present a GNN architecture for learning distributed algorithms in routing scenarios. The architecture can learn to generalise across different network topologies and varying numbers of sink nodes.

    \item We establish a sufficient condition for the stability of DPP algorithms and use this result to propose a method for learning throughput-optimal routing algorithms.

    \item We develop an efficient, fully differentiable implementation of DPP that jointly performs power allocation and link scheduling, enabling end-to-end training and practical deployment.  

\end{itemize}

\subsection{Structure}
In Section \ref{sec:set-up}, we introduce the routing problem along with the concepts of network stability and throughput-optimal routing algorithms. In Section \ref{sec:architecture} we present the main ML architecture we will be using for learning distributed routing algorithms. Section \ref{sec:Lyapunov} presents the class of drift-plus-penalty algorithms, some common algorithm variants, along with our proposed learned DPP method. We then develop an efficient, end-end differentiable implementation of DPP algorithms in Section \ref{sec:optim}. Finally, we validate our method using simulations in Section \ref{sec:results}, with conclusions and potential for future work in Section \ref{sec:conc}. Appendices \ref{apndx:stable_proof} and \ref{apndx:OT_proof} provide proofs, while Appendices \ref{appendix:sink_impl} and \ref{appendix:mw_impl} describe implementation details of the scheduling schemes considered.

\section{Problem set-up} \label{sec:set-up}

We consider the joint problem of transmit-power allocation and data routing in a network with the objective of stabilising data queues while minimising a given penalty function.

Let $\mathcal G = (V, L)$ be a network with node indices $V = \{1, \hdots, n\}$ and links $L \subset V^{2}$, and let $C = \{1,\hdots, m \} \subset V$ denote the commodity (sink) nodes. Time is slotted $t \in \mathbb{N}_0$, and at each slot the system maintains a set of queues $Q(t) \in \mathbb{R}^{n\times m}_+$ where each $Q_{i,c}(t)$ contains the data stored at node $i \in V$ which is destined for commodity $c \in C$. We assume $Q(0) = 0$ and set $Q_{i,c}(t) \equiv 0$ if $i=c$ since data is removed from the network upon reaching its destination. At each time step, random external data $A(t) \in \mathbb{R}^{n\times m}_+$ arrive at queues $Q(t)$, and a stochastic network state $S(t) \in \mathcal{S}$ is observed. It is assumed that $\{S(t)\}_t, \{A(t)\}_t$ are i.i.d, with $A(t)$ having finite first and second moments.

Given $(S(t), Q(t))$, a routing algorithm first selects link transmit powers $P(t) \in \Pi \subset \mathbb{R}^{n \times n}$, where $\Pi$ is the set of feasible power allocations and $P_{ij}(t)$ is the power transmitted from nodes $i$ to $j$. The choice of $P$ determines both the link capacities $\kappa (P, S)\in \mathbb{R}^{n \times n}_+$ as well as a network penalty $p(P, S) \in \mathbb{R}$. Data can only be transmitted along the links $L$, so we assume $\kappa_{ij} \equiv P_{ij} \equiv0$ if $(i,j) \notin L$. We further assume that there exists constants $\kappa_\text{max},\, p_\text{min}$ such that $\kappa \leq \kappa_\text{max}$ and $p_\text{min} \leq p$. Following power allocation, the algorithm selects a link schedule $\mu(t) \in \mathbb{R}^{n \times n \times c}$, where $\mu_{i,j,c}(t)$ is the amount of commodity $c$ data transferred from nodes $i$ to $j$, subject to link capacity constraints $\sum _{c} \mu_{i,j,c}(t) \leq \kappa_{ij}(t)$ and queue constraints $\sum_j \mu_{i,j,c}(t) \leq Q_{i,c}(t)$. Queue dynamics for $i \neq c$ then follow
\begin{align} \label{eq:evolve}
    \begin{split}
        Q_{i,c}(t+1) = Q_{i,c}(t) + \delta_{i,c}(t) + A_{i,c}(t),
    \end{split}
\end{align}
where $\delta_{i,c}(t) = \sum _j \mu_{j,i,c}(t)- \sum _j \mu_{i,j,c}(t).$

We assume that data is continuous (fluid) as opposed to discrete (packetised), that $\Pi$ is convex, and that $p$ and $\kappa$ are differentiable w.r.t $P$. These assumptions enable us to learn routing algorithms in an end-end fashion.

A system of queues $\{Q(t)\}_t$ is said to be stable if the average expected queue size is bounded, that is, if $$\limsup_{t \to \infty}\frac{1}{t} \mathbb{E}[||Q(t)||_1] < \infty.$$ Stability depends on the arrival-rate matrix $\lambda = \mathbb{E}[A] \in \mathbb{R}^{n \times m}_+$, and the routing algorithm employed. For any given network, the capacity region $\Lambda$ is defined as the set of all rate matrices $\lambda$ for which there a exists a routing algorithm that stabilises $\{Q(t)\}_t$. A routing algorithm is said to be throughput-optimal if it stabilises all queues with rate matrices in the interior of $\Lambda$. Our goal is to design  throughput-optimal routing algorithms which additionally minimise the network penalty. Formally, this leads to the following optimisation problem:
\begin{align} \label{eq:prob}
    \tag{$\mathbf{P}$}
    \begin{split}
    \min \quad & \sum^\infty_{t=0} \frac{1}{t} \mathbb{E}\left[p(P(t), S(t))\right], \\
    \text{s.t.}\quad & \limsup_{t \to \infty}\frac{1}{t} \mathbb{E}[||Q(t)||_1] < \infty.
    \end{split}
\end{align}
We denote by $p^*$ the optimal value of penalty under a stable algorithm, which is assumed to exist.

\section{A GNN architecture for distributed routing} \label{sec:architecture}
Based on the network model in the previous section, we introduce a GNN architecture for learning distributed algorithms. By combining message passing and pooling operations, the architecture can generalise to arbitrary network topologies $\mathcal G$ and to any number of commodities $C \subset V$. This will serve as the backbone for the learned DPP algorithms developed in the remainder of the paper.

For each node $i \in V$, we introduce a latent state $z^t_i \in \mathbb{R}^k$ that encodes the node’s internal knowledge of the network at time $t$, initialised as $z^0_i = 0$. These latent states are updated over time and can be used directly to compute downstream quantities such as transmit powers $P$ or link schedules $\mu$.

At each time step, we form a queue vector $l^t_{i,c}$ containing features such as the queue length $Q_{i,c}(t)$, a commodity identifier for $c$, and (optionally) the backlog function $U_{i,c}(t)$ defined in Section~\ref{sec:Lyapunov}. To handle a variable number of commodities, we aggregate all queue vectors at node $i$ using a permutation-invariant pooling function:
$$a^t_i = \omega_\theta \left( \{l^t_{i, c}\}_{c\in C} \right), $$
where $\omega_\theta: \mathcal{P}(\mathbb{R}^{|l|}) \to \mathbb{R}^k$ is a learned pooling operator \cite{gnn_pooling} and $\mathcal{P}$ denotes the power-set operator. 

The latent states are then updated via a single GNN message-passing layer followed by a recurrent update:
\begin{align*}
    u^{t\phantom{+ 1}} &= \GNN_\theta(z^t||a^t,\, I,\, E), \\
    z^{t+1} &= \text{GRU}_\theta(z^t,\,u^t)
\end{align*}
where $E \in \mathbb{R}^{e \times |L|}$ are any potential edge features, $||$ denotes concatenation, and GRU is the gated recurrent unit. By using only a single message-passing layer, we ensure that the algorithm can be executed in distributed manner as node states are updated using only their immediate neighbours. 

For pooling, we employ softmax attention pooling \cite{softmax_pool}: $$\omega_\theta(x_1,\, \hdots,\, x_n) = \softmax(\MLP^K_\theta(x)) \cdot \MLP^V_\theta(x).$$ For the GNN layer, we used the GINE model \cite{gine}, which allows us to accommodate the edge features $E$. We found this choice of architecture to work well in practice.

\section{Drift-plus-penalty algorithms}\label{sec:Lyapunov}
In this section, we define the class of DPP algorithms and establish conditions under which they solve \eqref{eq:prob}. We then introduce several common variants of the algorithm before presenting our learned DPP algorithm.

\subsection{DPP optimisation}\label{sec:dpp}
At any time--step $t$, we define the drift-plus-penalty optimisation problem:

\begin{align} \label{eq:dpp}
    \tag{$\mathbf{D}$}
    \begin{split}
    \max_{\mu \geq 0, P \in \Pi}\quad & W(t) \cdot \mu - Vp(P, S(t)) \\
    \text{s.t.} \quad & \sum_c \mu_{i,j,c} \leq \kappa_{i,j}(P,S(t)), \; \forall (i,j) \\
     & \sum_j \mu_{i,j,c} \leq Q_{i,c}(t),\; \forall (i,c)
    \end{split}
\end{align}
where $$W_{i,j,c}(t) = U_{i,c}(t) - U_{j,c}(t),$$ $U(t) = U(Q(t), S(t)) \in \mathbb{R}^{n \times m}$ is a given backlog function, and $V \geq 0$ is a penalty weight balancing the trade-off between minimising penalties and queue size. We define the DPP algorithm $\mathcal{D}(U,V): (Q, S) \mapsto (\mu, P)$ as the routing algorithm that, at each time $t$, computes $(P(t), \mu(t))$ by solving problem \eqref{eq:dpp}.

For appropriate backlog functions $U$, the DPP algorithm $\mathcal{D}(U,V)$ is optimal in the sense that it is throughput-optimal for all $V$, and approaches the minimal penalty $p^*$ as $V \to \infty$. Building on results from \cite{qlearning_BP, BP_paper}, we establish (see the proof in Appendix \ref{apndx:stable_proof}), the following sufficient condition for optimality.

\begin{theorem} \label{thrm:stability}
    If a backlog function $U$ satisfies $$||U(t) - Q(t)||_\infty \leq B$$ for some $B \geq 0$ and all $t$, then $\mathcal{D}(U,V)$ is throughput optimal for all $V \geq 0$, and achieves the optimal expected penalty $p^*$ as $V \to \infty$. These results still hold if the queue constraint $\sum_j \mu_{i,j,c} \leq Q_{i,c}(t)$ is omitted from \eqref{eq:dpp}.
\end{theorem}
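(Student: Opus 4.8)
The plan is to use the standard Lyapunov drift argument adapted to the backlog function $U$. Define the quadratic Lyapunov function $L(t) = \tfrac12 \|Q(t)\|_2^2 = \tfrac12\sum_{i,c} Q_{i,c}(t)^2$ and consider the one-step conditional drift-plus-penalty expression $\Delta(t) + V\,\mathbb{E}[p(P(t),S(t)) \mid Q(t)]$, where $\Delta(t) = \mathbb{E}[L(t+1) - L(t) \mid Q(t)]$. Squaring the queue update \eqref{eq:evolve} and using the capacity bound $\kappa \le \kappa_{\text{max}}$ together with the finite second moment of $A$, the first step is to derive a bound of the form
\begin{align*}
    \Delta(t) + V\,\mathbb{E}[p \mid Q(t)]
    \le \beta - \sum_{i,c} Q_{i,c}(t)\,\mathbb{E}[\delta_{i,c}(t)\mid Q(t)]
         + V\,\mathbb{E}[p \mid Q(t)],
\end{align*}
for a finite constant $\beta$ absorbing the second-moment and arrival terms. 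After a summation-by-parts rearrangement of the $\sum_{i,c} Q_{i,c}\,\delta_{i,c}$ term into the edge-difference form $\sum_{(i,j),c} (Q_{i,c}-Q_{j,c})\mu_{i,j,c}$, the right side becomes $\beta + \langle \lambda, Q\rangle$ minus the objective of \eqref{eq:dpp} but written with the true queue differences $Q_{i,c}-Q_{j,c}$ in place of the backlog differences $W_{i,j,c} = U_{i,c}-U_{j,c}$.

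The crucial step, and where the hypothesis $\|U(t)-Q(t)\|_\infty \le B$ enters, is to control the gap between the genuine-queue objective and the $U$-based objective that $\mathcal{D}(U,V)$ actually maximises. Since $\mathcal{D}(U,V)$ maximises $W\cdot\mu - Vp$, I would compare its drift against that of any alternative (stationary randomised) policy. For each transferred unit the error in the weight is at most $|W_{i,j,c} - (Q_{i,c}-Q_{j,c})| \le |U_{i,c}-Q_{i,c}| + |U_{j,c}-Q_{j,c}| \le 2B$, and the total flow on each link is bounded by $\kappa_{\text{max}}$. Hence the $U$-optimal schedule is within an additive constant of order $B\,\kappa_{\text{max}}\,|L|\,m$ of being optimal for the true-queue drift. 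This lets me slot $\mathcal{D}(U,V)$ into the standard comparison: its drift-plus-penalty is bounded above by that of a well-chosen comparison policy plus a finite $B$-dependent constant, which I absorb into an enlarged constant $\beta'$.

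I would then invoke the existence of an $\epsilon$-interior stationary randomised policy. For $\lambda$ in the interior of $\Lambda$, there is $\epsilon>0$ and a stationary policy achieving expected drift $\mathbb{E}[\delta_{i,c}] \le -\epsilon + \lambda_{i,c}$ componentwise while attaining expected penalty arbitrarily close to $p^*$; this is the standard capacity-region characterisation I am permitted to assume from \cite{RA_book,BP_paper}. Substituting this comparison policy yields
\begin{align*}
    \Delta(t) + V\,\mathbb{E}[p \mid Q(t)]
      \le \beta' + V p^* - \epsilon \,\|Q(t)\|_1.
\end{align*}
Taking total expectations, telescoping over $t=0,\dots,T-1$, using $L\ge 0$ and $L(0)=0$, and dividing by $T$ gives the two conclusions simultaneously: dropping the penalty term bounds $\tfrac{1}{T}\sum_t \mathbb{E}\|Q(t)\|_1 \le (\beta'+Vp^*)/\epsilon$, which is throughput optimality; while dropping the $-\epsilon\|Q\|_1$ term and letting $V\to\infty$ bounds the time-averaged penalty by $p^* + \beta'/V \to p^*$.

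Finally, for the remark that the queue constraint $\sum_j \mu_{i,j,c}\le Q_{i,c}(t)$ may be dropped, I would note that omitting a constraint only enlarges the feasible set and therefore can only increase the maximised objective of \eqref{eq:dpp}, so the comparison-policy upper bound is unaffected; the only care needed is that a queue may be driven negative, which is handled by reinterpreting $\delta_{i,c}$ with the actual (possibly truncated) transmitted amount and observing that the drift inequality still holds since transmitting from an empty queue only adds extra non-positive contributions to the drift. The main obstacle I anticipate is making the $2B$ weight-error bound fully rigorous in the comparison argument — specifically, verifying that the additive slack it introduces is genuinely a constant independent of $Q(t)$ and does not interact with the $-\epsilon\|Q\|_1$ term, which is what preserves both guarantees.
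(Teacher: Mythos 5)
Your proposal follows essentially the same route as the paper's proof: a quadratic Lyapunov drift bound, use of the hypothesis $\|U-Q\|_\infty\le B$ to swap the true queue differentials for the backlog weights at the cost of an additive constant of order $B\kappa_{\text{max}}|L|$, comparison against a stationary ($S$-only) policy with strictly negative net drift, and a telescoping average that yields stability and the $O(1/V)$ penalty gap simultaneously; the remark about dropping the queue constraint is likewise handled the same way. The argument is correct up to two cosmetic slips — the interior-of-$\Lambda$ condition should read $\mathbb{E}[\delta_{i,c}]+\lambda_{i,c}\le-\epsilon$, and dropping the penalty term in the stability step requires the lower bound $p\ge p_{\text{min}}$, giving a constant $V(p^*-p_{\text{min}})$ rather than $Vp^*$.
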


Theorem~\ref{thrm:stability}, along with related results \cite{lyapnunov_prob1_stabiliy}, follows from bounding the sum of the queue Lyapunov drift, $||Q(t+1)||_2^2 - ||Q(t)||_2^2$, and the associated penalty — hence the name “drift-plus-penalty.”

DPP algorithms reduce the original stochastic planning problem \eqref{eq:prob}, to a set of deterministic greedy problems \eqref{eq:dpp}, which require no knowledge of the network state or external data distributions. If the backlog constraint is omitted, then $\mathcal{D}(Q, 0)$ is the Back-pressure (BP) routing algorithm \cite{bp_original_paper}. As later discussed in Section \ref{sec:optim}, the BP algorithm can be implemented in a distributed manner without knowledge of the network topology, using a simple closed-form solution of \eqref{eq:dpp}.

While theoretically appealing and simple to implement, the BP algorithm tends to exhibit long network delay when compared to shortest-path based routing algorithms \cite{distance_route, geo_route}, especially when backlogs are small. In the extreme case when the network contains only a single packet, BP routing follows a random walk on $\mathcal{G}$, ignoring the packet’s destination. These shortcomings have motivated the design of alternative backlog functions $U$ which reduce network delay while preserving queue stability.

\subsection{Backlog functions} \label{sec:backlog}
In DPP algorithms, routes from sources to commodities are not predetermined, but chosen dynamically via the backlog function $U$, allowing routes to adapt to changing network conditions. Backlog functions are often designed such that each component $U_{i,c}(t)$ is some measure of the time taken to route data from $i$ to $c$. We first introduce two common backlogs: the shortest-path and queue-biased functions, before describing our proposed learned backlog function.

\subsubsection{Shortest-path backlog}
The shortest-path (SP) backlog \cite{spbp} is defined as $$U^\text{SP}(t) = Q(t) + cD,$$ where $c>0$ and $D \in \mathbb{N}^{n \times m}_0$ is the matrix of shortest-path distances between nodes and commodities, which can be computed in a distributed manner using Dijkstra's algorithm \cite{Dijkstra}. 

The SP backlog bias routing decisions toward minimizing distance to commodities, and its effect is most pronounced when queue sizes are small. For a single-packet network, $\mathcal{D}(U^\text{SP}, 0)$ now follows the shortest-path from source to sink. Since $||D||_\infty \leq n-1$ for a connected network, it follows by Theorem \ref{thrm:stability} that this backlog yields an optimal DPP algorithm.

Although SP significantly improves upon the basic BP method, it does not explicitly account for network congestion: the shortest path between nodes may not be optimal when link capacities are heavily loaded.

\subsubsection{Queue-biased paths}
To better capture the effects of congestion, \cite{enhanced_delay_BP} proposed incorporating queue sizes into path selection. Following this idea, we define the queue-biased shortest-path (QSP) backlog as
$$U^\text{QSP}_{i,c}(t) = \min_{\pi \in \Gamma_{i,c}} \sum\nolimits^{|\pi|}_{j =  1} \mathbb{E}[Q_{\pi_j,c}(t + j - 1)],$$
where $\Gamma_{i,c}$ is the set of all paths $\pi = (i, j_2, j_3, \hdots, c)$ from $i$ to $c$ where $\pi_{j + 1} \in N_j$. Note that $U^\text{QSP}_{i,c} \equiv 0$ if $i=c$.

It can be seen that the QSP backlog satisfies the Bellman equation:
$$U^\text{QSP}_{i,c}(t) = \mathbb{E}\big[ Q_{i,c}(t) +  \min_{j \in N_i} U^\text{QSP}_{j,c}(t+1)\big].$$

In analogy to Q-learning \cite{dql} and using a similar method to \cite{qlearning_BP}, we may use the Bellman equation to estimate the QSP backlog as a neural network (with reference to Section \ref{sec:architecture})
\begin{equation} \label{eq:qsp}
    U^\text{QSP}_{i,c}(t) = \MLP^\text{QSP}_\theta(z^t_i||l^t_{i,c})
\end{equation}
 trained to minimise the temporal difference $$\tau =\frac{1}{nm} \sum_{i,c} \left(U^\text{QSP}_{i,c}(t) - Q_{i,c}(t) - \min_{j \in N_i} U^\text{QSP}_{j,c}(t+1) \right)^2.$$ 

The QSP backlog is not bounded from $Q(t)$ in general and is hence not necessary optimal.

\subsubsection{Neural backlogs} \label{sec:neural_bl}
Instead of designing a specific backlog function, we propose an end-end method for learning $U$ directly from a model of the problem. We do this simply by using a learned backlog $U^\text{N}$ which is computed using \eqref{eq:qsp}. We consider two variants of this backlog: Neural-B, which uses a sigmoid output layer to satisfy the bound $||Q - U^\text{N}||_\infty \leq B$, and Neural, which instead uses a linear output layer and hence does not satisfy Theorem \ref{thrm:stability} in general.

Under the smoothness assumptions of Section~\ref{sec:set-up} and assuming that $\mathcal{D}(U,V)$ is differentiable w.r.t $U$, $U^\text{N}$ can be trained directly using any loss function of $Q(t)$ or $\mu(t)$. In this work, neural backlogs are trained to minimise the queue-size loss $$L(Q(t)) = ||Q(t)||_1,$$ though other metrics such as average link utilisation or packet delay~\cite{first-packet_sojourn} are also possible. To enable this approach, we now present an efficient, end-end differentiable implementation of DPP algorithms.

\section{Implementing DPP algorithms} \label{sec:optim}
We propose an ML-based method for obtaining approximate solutions to \eqref{eq:dpp}, thereby enabling practical implementations of the DPP algorithm $\mathcal{D}(U,V)$ for any given backlog $U$ and penalty weight $V\geq 0$. Our approach decomposes the problem into two stages: power allocation, which is done using an ML-method, and link scheduling, which is computed using optimal transport methods.

\subsection{Transmit power allocation}
The transmit powers $P$ are computed using a GNN model which considers the network topology, queues, state, and backlog function:
$$P = f_\phi(\mathcal{G}, Q, U, S).$$
The network $f_\phi$ follows the architecture described in Section~\ref{sec:architecture}, extended with an output layer that maps node states $z$ to power allocations. Many methods exists for enforcing convex constraints $P \in \Pi$ \cite{approx_guarantee, vector_clip, linsatnet}, with the choice depending on the structure of $\Pi$. Here, we assume node-wise power constraints of the form $$\Pi = \left\{ P\in \mathbb{R}^{n \times n}_+ : \sum\nolimits_j  P_{ij} \leq P_\text{max}, \  \forall i \right\}$$ where $P_\text{max} > 0$ is an upper bound on the transmit power of each node. Such constraints are simple enough that feasibility can be enforced directly through the output layer:
\begin{align*}
    P_{ij} = P_\text{max}\,\softmax \left(\MLP^P_\phi (\bar{z}^t_i)\, ||\, \MLP^\text{slack}_\phi (z^t_i) \right)_j,
\end{align*}
where $\bar{z}^t_i = (z^t_i||z^{t-1}_k)_{k \in N_i}.$ The following section outlines how $\phi$ is trained.

\subsection{Scheduling data using optimal transport} \label{sec:schedule}
Once $P$ has been determined, problem~\eqref{eq:dpp} reduces to an optimization over the scheduling variables $\mu$. This yields a linear program that decomposes row-wise over $\mu_i$:

\begin{align} \label{eq:schedule}
    \tag{$\mathbf{W}$}
    \begin{split}
        \max_{\mu_i \geq 0} \quad & W_{i} \cdot \mu_{i}\\ 
        \text{s.t.} \quad &  \mu_{i}\, \mathds{1} \leq \kappa_{i}(P,S), \\
        &\mu_{i}^T \mathds{1} \leq Q_{i},
    \end{split}
\end{align}
where, for brevity, the explicit time dependence is omitted.

If the queue constraint is omitted, the solution to \eqref{eq:schedule} reduces to the max-weight schedule:

\begin{equation} \label{eq:mw}
    \mu_{i, j, c} = \kappa_{ij}(P,S)\; \mathbf{1}_{\{c = \arg \max_{c'} W_{i, j, c'} \}},
\end{equation}
which is commonly used to implement the BP algorithm.

While max-weight allocation is sufficient for guaranteeing the optimality of 
$\mathcal{D}(U,V)$, it can result in significant underutilization of link capacity when queue lengths are small. To address this, we present a method for obtaining fast, approximate solutions to \eqref{eq:schedule} by using methods from discrete optimal transport. 

Let $s_i = \mathds{1} \cdot \kappa_{i}(P,S)$ and $q_i = \mathds{1} \cdot Q_{i}$. Introducing slack variables $\gamma \in \mathbb{R}^{n}_+,\, \sigma \in \mathbb{R}^{c}_+$, we define the following problem, closely related to \eqref{eq:schedule}:

\begin{align} \label{eq:transport}
    \tag{$\mathbf{T}$}
    \begin{split}
    \max_{\mu_i,\, \gamma,\, \sigma \geq 0}\quad & W_{i}^+ \cdot \mu_{i} \\ 
    \text{s.t.}\quad & \mu_{i}\, \mathds{1} + \gamma = \kappa_{i}(P,S),\\
    & \mathds{1} \cdot \gamma = (s_i - q_i)^+,\\
    &\mu_{i}^T \mathds{1} + \sigma = Q_{i},
    \\ & \mathds{1}\cdot \sigma =(q_i - s_i)^+.
    \end{split}
\end{align}
This can be interpreted as a discrete optimal transport problem with unnormalized source and target distributions $$I_i = Q_i\, ||\, (s_i - q_i)^+ \ \text{ and } \ T_i = \kappa_i \, || \, (q_i - s_i)^+,$$ respectively. The following theorem (proven in Appendix \ref{apndx:OT_proof}) describes the relationship between these two optimisation problems.

\begin{theorem} \label{thrm:OT}
    Let $(\mu^\textbf{T}_i, \gamma^\textbf{T}, \sigma^\textbf{T})$ be a solution of \eqref{eq:transport}, then 

    \begin{equation} \label{OT_to_MW}
        \mu^\textbf{W}_i = \left(\mu^\textbf{T}_{i,j,c}\; \mathbf{1}_{\{ W_{i,j,c} > 0 \}} \right)_{j,c}
    \end{equation}
    
    is a solution of \eqref{eq:schedule}.
\end{theorem}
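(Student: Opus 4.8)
The plan is to show that problems \eqref{eq:schedule} and \eqref{eq:transport} share the same optimal value, and that the masked plan $\mu^\textbf{W}_i$ defined in \eqref{OT_to_MW} is feasible for \eqref{eq:schedule} while attaining this common value. Writing $\mathrm{opt}(\mathbf{W})$ and $\mathrm{opt}(\mathbf{T})$ for the optima of \eqref{eq:schedule} and \eqref{eq:transport}, I would prove $\mathrm{opt}(\mathbf{W}) = \mathrm{opt}(\mathbf{T})$ by a matching pair of inequalities, after which optimality of $\mu^\textbf{W}_i$ is immediate.

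For the direction $\mathrm{opt}(\mathbf{W}) \ge \mathrm{opt}(\mathbf{T})$, I would first observe that because $\gamma^\textbf{T}, \sigma^\textbf{T} \ge 0$, the equality constraints of \eqref{eq:transport} give $\mu^\textbf{T}_i \mathds{1} = \kappa_i - \gamma^\textbf{T} \le \kappa_i$ and $(\mu^\textbf{T}_i)^T \mathds{1} = Q_i - \sigma^\textbf{T} \le Q_i$, so $\mu^\textbf{T}_i$ already satisfies the inequality constraints of \eqref{eq:schedule}. Since $\mu^\textbf{W}_i \le \mu^\textbf{T}_i$ entrywise (the masking only zeros entries), $\mu^\textbf{W}_i$ is feasible for \eqref{eq:schedule} as well. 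A direct computation then gives $W_i \cdot \mu^\textbf{W}_i = \sum_{W_{i,j,c}>0} W_{i,j,c}\,\mu^\textbf{T}_{i,j,c} = W_i^+ \cdot \mu^\textbf{T}_i = \mathrm{opt}(\mathbf{T})$, where the indicator in \eqref{OT_to_MW} is precisely what converts $W_i$ into its positive part. Thus a feasible point of \eqref{eq:schedule} attains the value $\mathrm{opt}(\mathbf{T})$, giving $\mathrm{opt}(\mathbf{W}) \ge \mathrm{opt}(\mathbf{T})$.

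For the reverse inequality I would start from an optimal $\mu_i^*$ of \eqref{eq:schedule}. Since entries with $W_{i,j,c} \le 0$ contribute nonpositively to the objective while only tightening the constraints, I may assume $\mu_i^*$ is supported on $\{W_{i,j,c} > 0\}$, whence $W_i \cdot \mu_i^* = W_i^+ \cdot \mu_i^*$. The difficulty is that $\mu_i^*$ need not be feasible for \eqref{eq:transport}: its total mass $M^* = \mathds{1}^T \mu_i^* \mathds{1}$ may fall strictly below $\min(s_i,q_i)$, whereas the equality constraints of \eqref{eq:transport} force the transported mass to equal $\min(s_i,q_i)$ exactly. To repair this I would augment $\mu_i^*$ by a nonnegative $\Delta\mu$ carrying the deficit $\Delta = \min(s_i,q_i) - M^* \ge 0$, respecting the residual row budgets $\kappa_i - \mu_i^*\mathds{1}$ and column budgets $Q_i - (\mu_i^*)^T\mathds{1}$. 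Such a $\Delta\mu$ exists because these residual budgets total $s_i - M^*$ and $q_i - M^*$, both at least $\Delta$, and any commodity may be routed on any link, so the associated transportation problem on a complete bipartite structure is feasible. Taking $\gamma = \kappa_i - (\mu_i^*+\Delta\mu)\mathds{1}$ and $\sigma = Q_i - (\mu_i^*+\Delta\mu)^T\mathds{1}$ yields a point feasible for \eqref{eq:transport} of objective $W_i^+ \cdot (\mu_i^* + \Delta\mu) \ge W_i^+ \cdot \mu_i^* = \mathrm{opt}(\mathbf{W})$, using $W_i^+ \ge 0$. This gives $\mathrm{opt}(\mathbf{T}) \ge \mathrm{opt}(\mathbf{W})$.

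Combining the two inequalities gives $\mathrm{opt}(\mathbf{W}) = \mathrm{opt}(\mathbf{T})$, and since the second paragraph exhibits $\mu^\textbf{W}_i$ as a feasible point of \eqref{eq:schedule} attaining this value, $\mu^\textbf{W}_i$ solves \eqref{eq:schedule}. I expect the mass-augmentation step to be the main obstacle, as it is the only point requiring a genuine feasibility (Hall/transportation-type) argument rather than direct constraint manipulation; some care is also needed to confirm that discarding the nonpositive-weight entries of $\mu_i^*$ is genuinely without loss of optimality.
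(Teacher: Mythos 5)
Your proposal is correct, and it reaches the conclusion by a noticeably different route in the one step that carries the real technical weight. Both you and the paper begin the same way: the slack variables $\gamma^\textbf{T},\sigma^\textbf{T}\ge 0$ make $\mu^\textbf{T}_i$ (hence the masked $\mu^\textbf{W}_i$) feasible for \eqref{eq:schedule}, and the identity $W_i\cdot\mu^\textbf{W}_i = W_i^+\cdot\mu^\textbf{T}_i$ ties the two objectives together. The difference lies in how each argument rules out a better \eqref{eq:schedule}-solution. The paper argues by contradiction and leans on Lemma~\ref{lemma:mw_props}: an optimal $\mu^*_i$ of \eqref{eq:schedule} saturates a row or column constraint at every positive-weight entry, and this saturation structure is used to complete $\mu^*_i$ to a block matrix $M$ meeting the exact row/column targets of \eqref{eq:transport}. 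You instead prove $\mathrm{opt}(\mathbf{T})\ge\mathrm{opt}(\mathbf{W})$ directly by a mass-deficit augmentation: after restricting $\mu^*_i$ to its positive-weight support, you ship the remaining $\min(s_i,q_i)-M^*$ units of mass through the residual row and column budgets (an elementary transportation-feasibility fact on a complete bipartite structure), and observe that this can only increase the $W_i^+$-objective since $W_i^+\ge 0$. This sidesteps the saturation property entirely -- you only need the easy half of Lemma~\ref{lemma:mw_props} (zeroing nonpositive-weight entries is without loss) -- and it makes the completion step more transparent than the paper's ``choose any feasible residual submatrix'' instruction, which implicitly relies on the same balance of residual budgets that you make explicit. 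The price is that you must verify the augmentation is feasible, but as you note this is a standard and genuinely elementary transportation argument. Both proofs are sound; yours is arguably the cleaner of the two.
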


While the OT problem \eqref{eq:transport} is no easier to solve than the original problem \eqref{eq:schedule}, it is possible to use Sinkhorn's algorithm \cite{sinkhorn} to solve the following entropy-regularised variant of \eqref{eq:transport}:

\begin{align} \label{eq:entr}
    \tag{$\mathbf{E}$}
    \begin{split}
    \max_{\mu_i,\, \gamma,\, \sigma \geq 0}\quad & W_{i}^+ \cdot \mu_{i} - \\& \frac{1}{\eta} (\mu_{i} \cdot \log{\mu_{i}} + \gamma \cdot \log\gamma + \sigma \cdot \log \sigma )\\ 
    \text{s.t.}\quad & \mu_{i}\, \mathds{1} + \gamma = \kappa_{i}(P,S),\\
    & \mathds{1} \cdot \gamma = (s_i - q_i)^+,\\
    &\mu_{i}^T \mathds{1} + \sigma = Q_{i},
    \\ & \mathds{1} \cdot \sigma =(q_i - s_i)^+.
    \end{split}
\end{align}
where $\eta > 0$ controls the degree of regularisation.

Sinkhorn's algorithm is a simple scheme which solves problem \eqref{eq:entr} by iteratively normalising the rows and columns of $$M_i = \exp(\eta \bar{W}_i),$$ where $$\bar{W}_i = [(W_i\, ||\, 0)^T\, ||\, 0 ]^T,$$ so that the row and column sums match $T_i$ and $I_i$, respectively. The algorithm is known to be significantly faster than solving the original linear program \eqref{eq:transport}, and can be efficiently parallelised on GPUs, enabling the simultaneous solution of many problem instances~\cite{sinkhorn_permutation}. We therefore propose to compute the link schedule $\mu$ by applying Sinkhorn's algorithm at each node $i \in V$ to obtain the local schedule $\mu'_i$. Then, following \eqref{OT_to_MW}, we set $\mu = (\mu'_{i,j,c}\; \mathbf{1}_{\{ W_{i,j,c} > 0 \}})_{i,j,c}.$

Algorithm~\ref{alg:dpp} provides pseudo-code for our DPP algorithm implementation. Note that the inputs to Sinkhorn can be slightly simplified since either $I_{i,c + 1}$ or $T_{i,n + 1}$ are zero. The algorithm is differentiable w.r.t both $\mu$ and $P$, allowing us to train the parameters $(\phi, \theta)$ as usual using automatic differentiation software. In particular, $\phi$ is trained to maximise the objective of \eqref{eq:entr}.

Beyond its computational efficiency, Sinkhorn offers two further advantages as a surrogate for solving \eqref{eq:schedule}. First, observe that solutions to \eqref{eq:schedule} are piecewise constant as a function of the weights $W$. Consequently, $\nabla_\theta L \equiv 0$ almost everywhere, preventing the training of $\theta$. The entropy regularisation in \eqref{eq:entr} removes this degeneracy and restores useful gradients. Second, we find empirically that entropy regularisation directly improves the performance of some routing algorithms (Section~\ref{sec:results}). A plausible explanation is that regularisation encourages traffic to be distributed more evenly across available paths from $i$ to $c$, rather than concentrating on a single path, thereby alleviating congestion.

\begin{algorithm}
    \caption{Proposed forward pass of the DPP algorithm $\mathcal{D}(U, V)$. The for-loop over $V$ is shown for only for simplifying the exposition. In practice, the loop is replaced with vectorised GPU operations (see Appendix \ref{appendix:sink_impl}).}
    \label{alg:dpp}
    \begin{algorithmic}
        \item \textbf{Input}: $Q,\, S, \, \eta,\, \mathcal{G}, z^t_U, z^t_P$
        \item \textbf{Output}: $P,\, \mu, , z^{t+1}_U, z^{t+1}_P$
        \State $U,\, z^{t+1}_U \gets U^N_\theta(\mathcal{G}, Q, S, z^t_U)$
        \State $P,\, z^{t+1}_P \gets f_\phi(\mathcal{G}, Q, U, S, z^{t}_P)$
        \State $\kappa \gets \kappa(P,S)$
        \State $W_{i,j,c} \gets U_{i,c} - U_{j,c}, \ \forall (j, c)$
        \For{$i \in V$}
        \State $d = \mathds1 \cdot \kappa_{i} - \mathds1 \cdot Q_{i}$
        \If{$d > 0$}
            \State $\bar{W}_i \gets [W_i\ ||\ 0]$
            \State $I_i \gets [Q^T_i\ ||\ d]^T$
            \State $T_i \gets \kappa_i$
        \Else
            \State $\bar{W}_i \gets [W_i^T\ ||\ 0]^T $
            \State $I_i \gets Q_i$
            \State $T_i \gets [\kappa^T_i\ ||\ -d]^T$
        \EndIf
        \State $\mu_i \gets \text{Sinkhorn}(\eta \bar{W}_i, I_i, T_i)_{:n,\, :c}$
        \State $\mu_{i,j,c} \gets \mu_{i,j,c}\, \mathbf{1}_{\{W_{i,j,c} > 0 \}}, \ \forall (j,c)$ 
        \EndFor
    \end{algorithmic}
\end{algorithm}

\begin{figure*}[h]
    \centering\hspace{-8mm}
    \begin{subfigure}[t]{0.265\linewidth}
    \includegraphics[width=\linewidth]{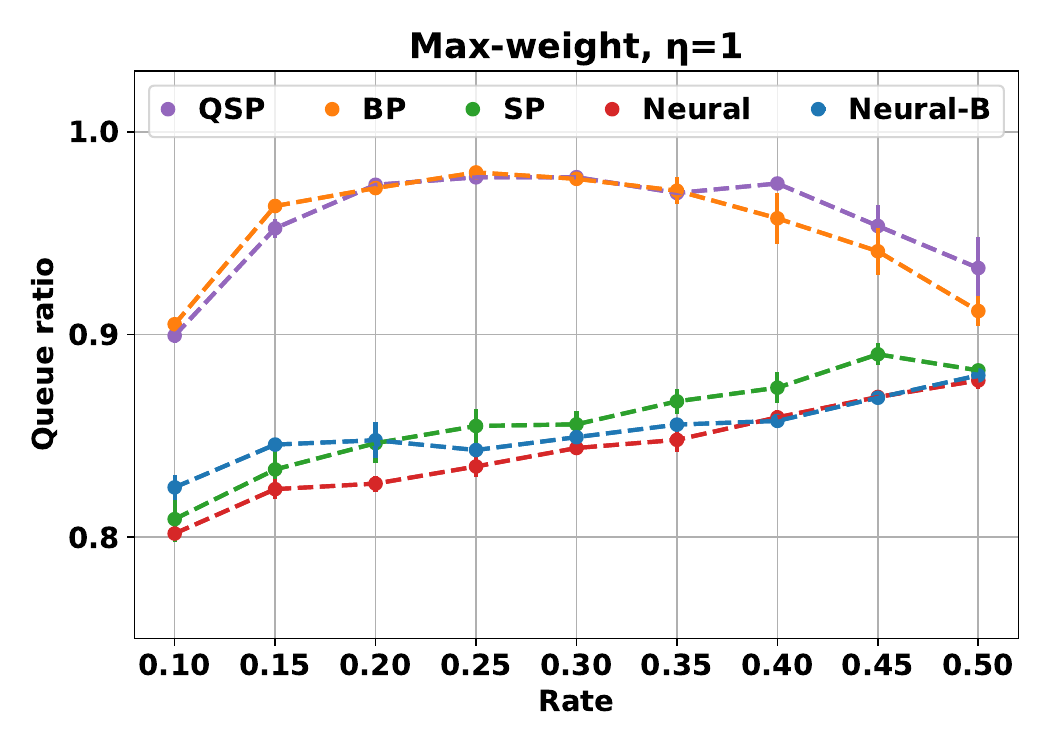}
    \end{subfigure} \hspace{-2.6mm}%
    \begin{subfigure}[t]{0.265\linewidth}
    \includegraphics[width=\linewidth]{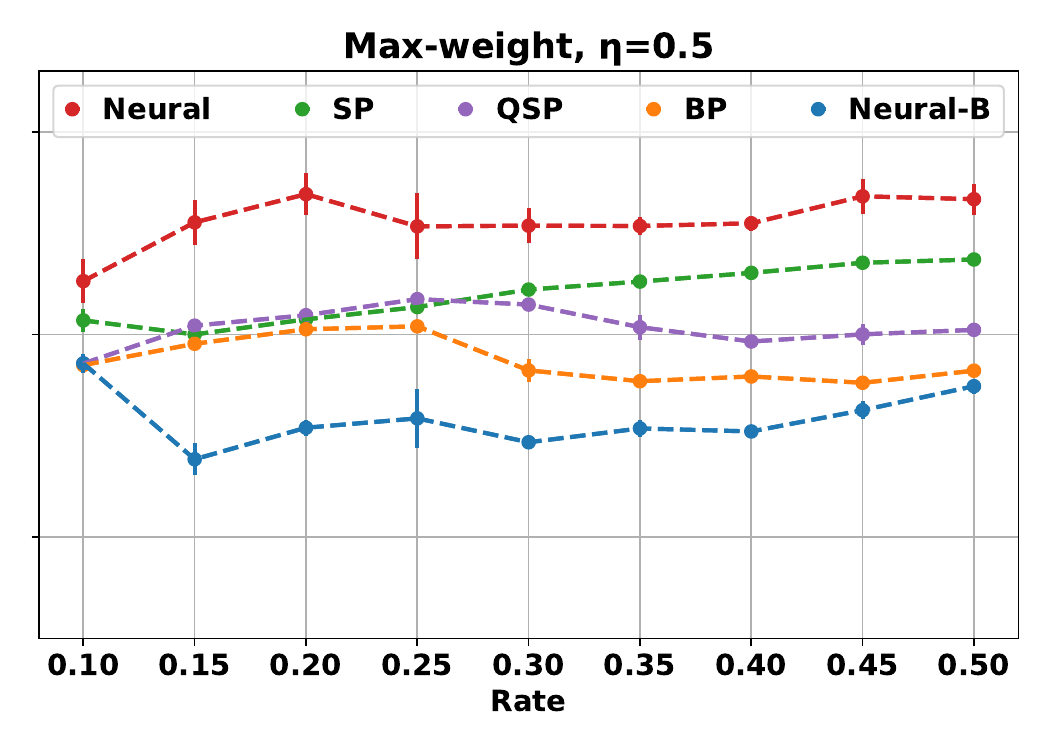}
    \end{subfigure} \hspace{-2.6mm}%
    \begin{subfigure}[t]{0.265\linewidth}
    \includegraphics[width=\linewidth]{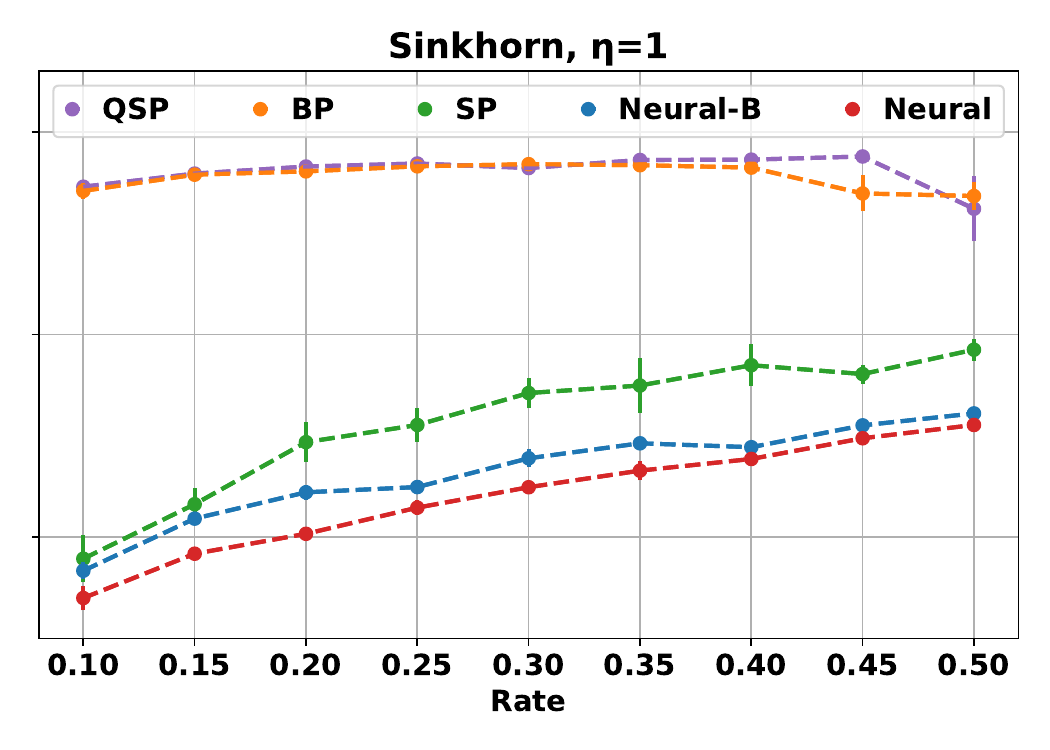}
    \end{subfigure} \hspace{-2.6mm}%
    \begin{subfigure}[t]{0.265\linewidth}
    \includegraphics[width=\linewidth]{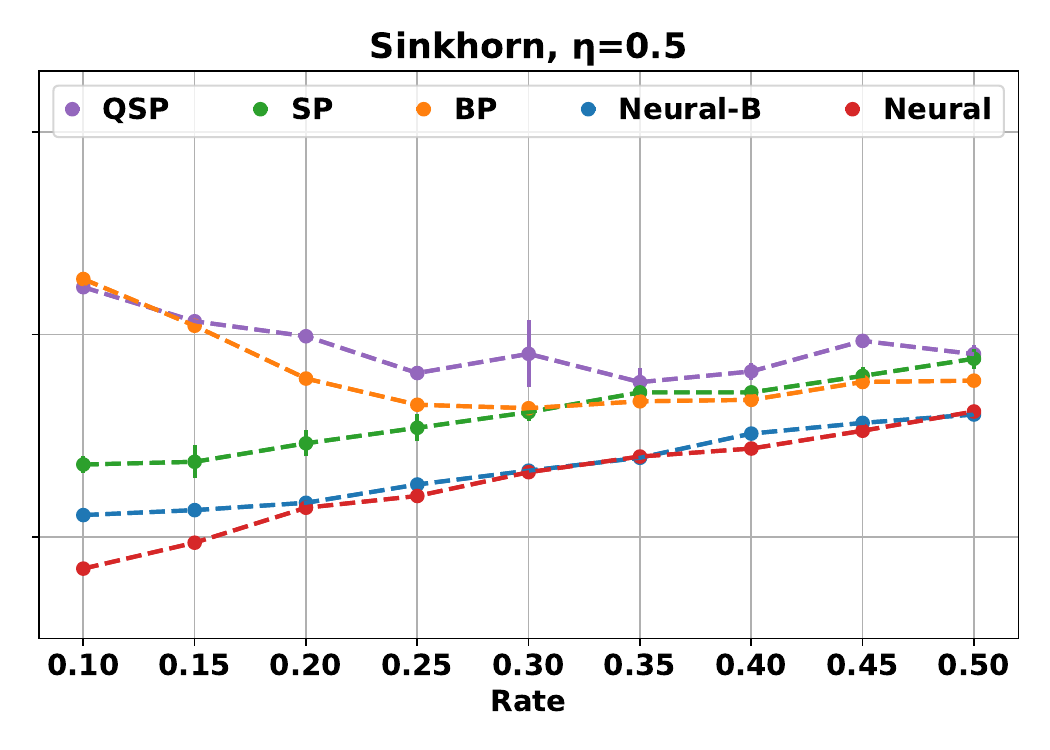}
    \end{subfigure}
    \caption{Plots of queue ratio as a function of data rate $\lambda$ for the Sinkhorn and Max-weight scheduling schemes with regularisation values $\eta \in \{0.5, 1\}$. Error bars show the standard error for each reading. The y-axis is shared across all plots.}
    \label{fig:none}
\end{figure*}

\section{Experiments}\label{sec:results}
We evaluate our approach by comparing neural backlog functions against the alternative backlog functions presented in Section~\ref{sec:backlog}, and by comparing the Sinkhorn-based scheduling method with the a variant of Max-weight (see Appendix \ref{appendix:mw_impl}).We also study the impact of the regularization parameter $\eta$, which influences not only Sinkhorn but also the Max-weight schedule via the objective used for $\phi$. For the Neural-B backlog, we fix $A = 10$.

Performance is reported using two metrics: the average penalty $\mathbb{E}[p]$, and the queue ratio:
$$ r = \mathbb{E}\left[\frac{||Q(t_\text{max})||_1}{\sum^{t_\text{max}-1}_{i=0} ||A(i)||_1} \right] = \frac{\mathbb{E}_{i \neq c} [ Q_{i,c} (t_\text{max})]}{\lambda_0 \, t_\text{max}} \in [0, 1],$$
which measures the total proportion of data still queued in the network after $t_\text{max}$ time-steps. By Little's law, smaller queue ratios indicate lower network delay.

We considered three penalty functions:
\begin{itemize}
    \item No penalty: $p_\text{none}(t) \equiv 0$
    \item Power consumption: $p_\text{cons}(t) = ||P(t)||_1$
    \item Energy efficiency:
    $$p_\text{eff}(t) = -\sum_{i, j} \frac{\kappa_{ij}(t)}{P_{ij}(t) + P_0}\,,$$
    where $P_0 > 0$ is the static power consumption. 
\end{itemize}   

\subsection{Experimental set-up} \label{ssec:generation}

To train and evaluate our methods, we generated random geometric networks $\mathcal G$ with a uniformly sampled number of nodes in the range $[20, 50]$, and an average of $20\%$ of these nodes being commodities. Networks were constructed by uniformly sampling a set of node positions $x_i \in [0,1]^2$, and connecting any two nodes $i,j \in V$ if $ 0<||x_i - x_j||_2 \leq d$, for some threshold $d>0$. Since connectivity is not guaranteed under this construction, some networks may contain disconnected components, in which case certain data may remain isolated from its designated commodity. While inconvenient, this does not affect the validity of our comparisons.

Given transmit powers $P \in \Pi$, the link capacity is computed as 
\begin{align*}
    \kappa_{ij}&(P) \\ = &\log_2\left(1 + \frac{h_{ij}P_{ij}}{\sum_{k\in N_j} h_{kh} \sum_{l \in N_k} P_{kl} - h_{ij}P_{ij} + N_0}\right),
\end{align*}
where $h_{ij} = (1 + ||x_i - x_j ||_2)^{-3}$ are the link channel states, and $N_0>0$ is background noise.

External data $A_{i,c}(t)$ is Poisson distributed with a common $\lambda_0 \in \mathbb{R_+}$, and we assume that the network state $S(t)$ is fixed for all $t$. Each network instance was simulated for a total of 100 time-steps, with the recurrent model described in Section \ref{sec:architecture} trained on blocks of 10 steps. We ran five seeds for each data point reported.

To ensure fair comparisons, all algorithms were implemented in a distributed manner: each node $i \in V$ makes power-allocation and scheduling decisions independently, exchanging at most one message per neighbour per time slot. The ML methods employ the architecture described in Section~\ref{sec:architecture}, which respects these assumptions. For the SP backlog baseline, this was enforced by updating the distance matrix $D$ at each time step via a single iteration of Dijkstra’s algorithm.

\subsection{No penalty results} \label{ssec:allocation}
For $p_\text{none}$, only the queue ratio is relevant. Figure~\ref{fig:none} and Table~\ref{tab:total_perf} confirm show that both Sinkhorn scheduling and neural backlog methods yield meaningful performance gains. Increasing entropy-regularisation (reducing $\eta$) significantly improves the weakest backlogs (BP and QSP). The Neural backlog achieves the best peak performance, closely followed by Neural-B -- which is more robust across $\eta$ values. The benefits of Sinkhorn scheduling are most evident at lower data rates, since Max-weight becomes optimal as $\lambda_0 \to \infty$ when the backlog constraint can be dropped.

\begin{table}[h]
    \centering
    \resizebox{\linewidth}{!}{
    \begin{tabular}{l|c|c|c|c|c}
        \toprule
         
        & BP & SP & QSP & Neural & Neural-B  \\ \midrule
        Max-weight, $\eta = 1$ & 0.953 & 0.857 & 0.957 & 0.843 & 0.852\\
        Max-weight, $\eta = 0.5$ & 0.887 & 0.920 & 0.904 & 0.956 & 0.859\\
        Sinkhorn, $\eta = 1$ & 0.978 & 0.857 & 0.980 & \textbf{0.820} & 0.832 \\
        Sinkhorn, $\eta = 0.5$& 0.880 & 0.861 & 0.894 & 0.827 & 0.834 \\
        \bottomrule
    \end{tabular}
    }
    \caption{Average queue ratio for all combinations of allocation methods (rows) and backlog functions (columns). Ratios are averaged across all tested data rates. All measurement errors are on the order of $10^{-3}$.}
    \label{tab:total_perf}
\end{table}
\subsection{Power consumption and energy efficiency} \label{ssec:backlog_exp}
For the power consumption and energy efficiency  penalties, we recorded both the average penalties and queue ratios across different values of the penalty weight $V \geq 0$. For simplicity, we focused on Sinkhorn scheduling with $\eta = 0.5$, which gave the best average performance, and varied only the backlog functions.

Clear-cut comparisons are challenging, as each method achieves a different balance between the two metrics at the same $V$, even after normalising the weights $W$. For power consumption, neural backlogs consistently maintained lower queue ratios, while Neural-B achieved penalties comparable to the other backlog functions.

\begin{figure}[h]
    \centering\hspace{-5mm}
    \begin{subfigure}[t]{0.52\linewidth}
        \includegraphics[width=\linewidth]{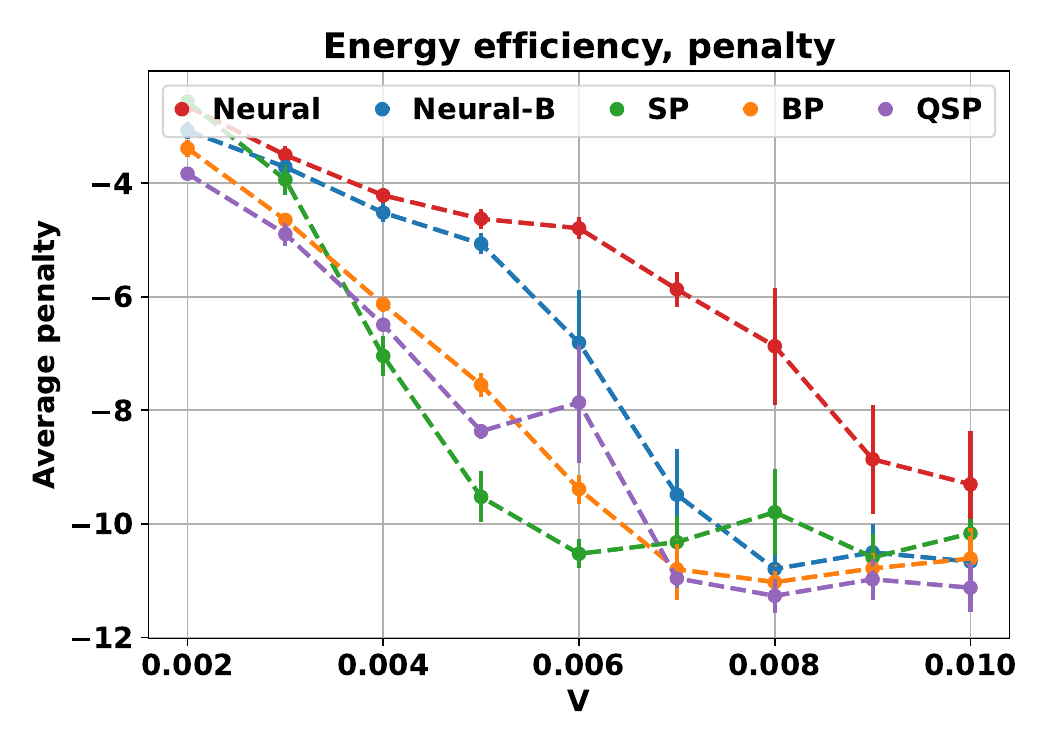}
    \end{subfigure}\hspace{-1mm}%
    \begin{subfigure}[t]{0.52\linewidth}
        \includegraphics[width=\linewidth]{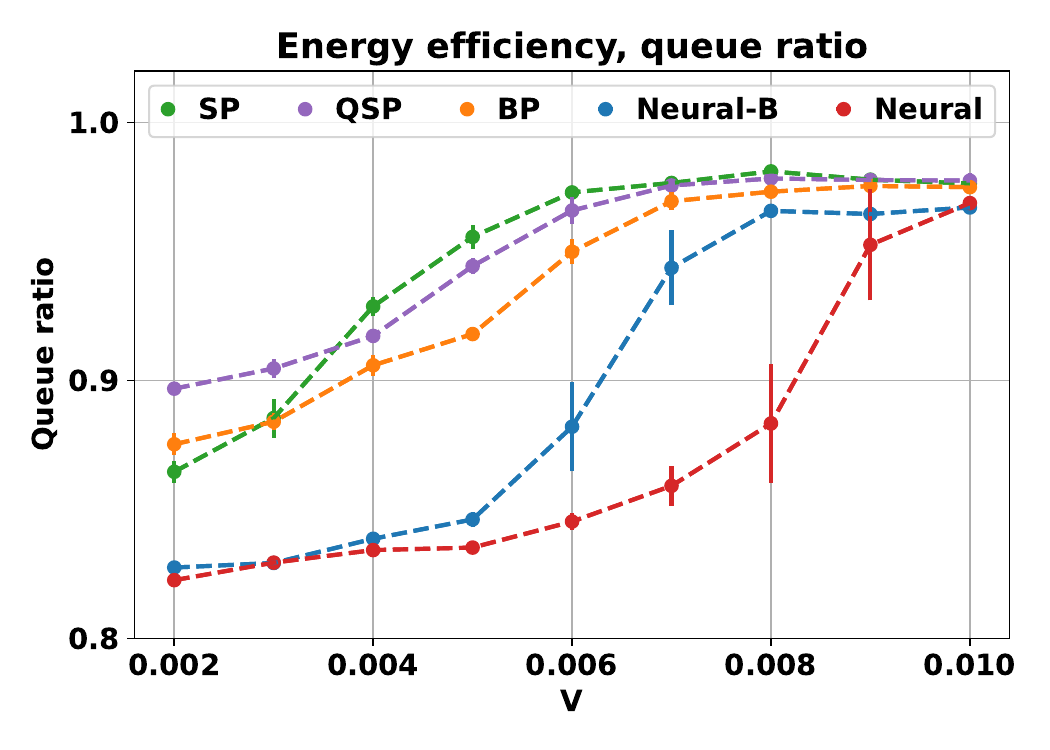}
    \end{subfigure}\vspace{-5mm}
    \begin{subfigure}[t]{0.52\linewidth} \hspace{-4mm}
        \includegraphics[width=\linewidth]{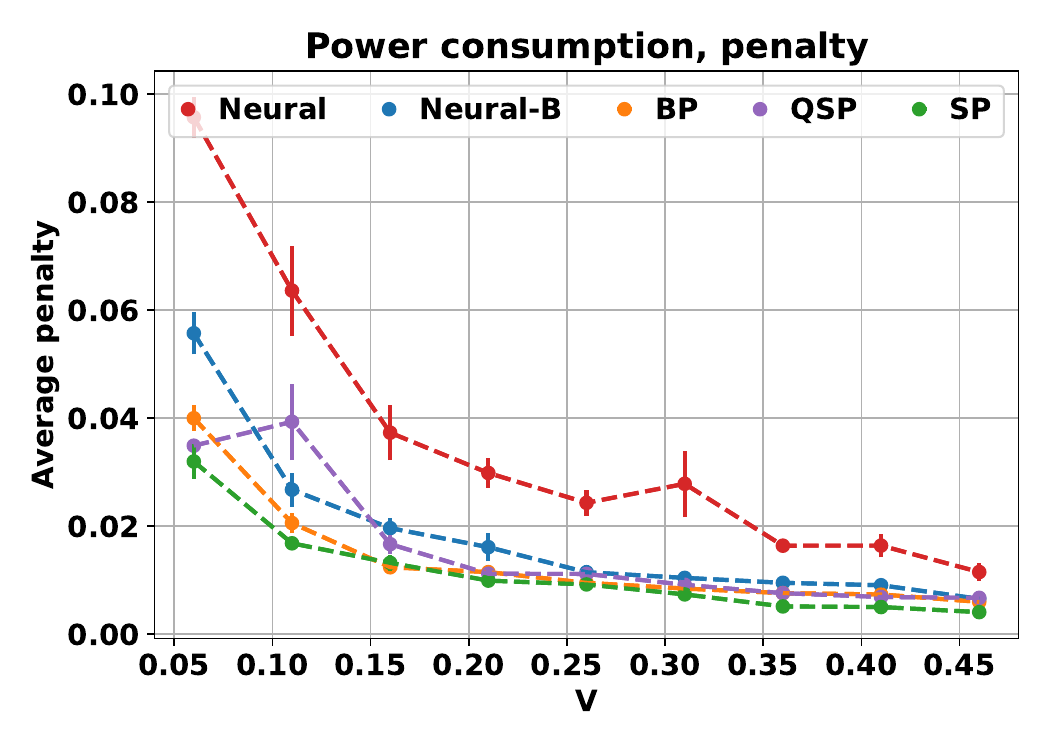}
    \end{subfigure}%
    \begin{subfigure}[t]{0.52\linewidth}
        \hspace{-5mm}
        \includegraphics[width=\linewidth]{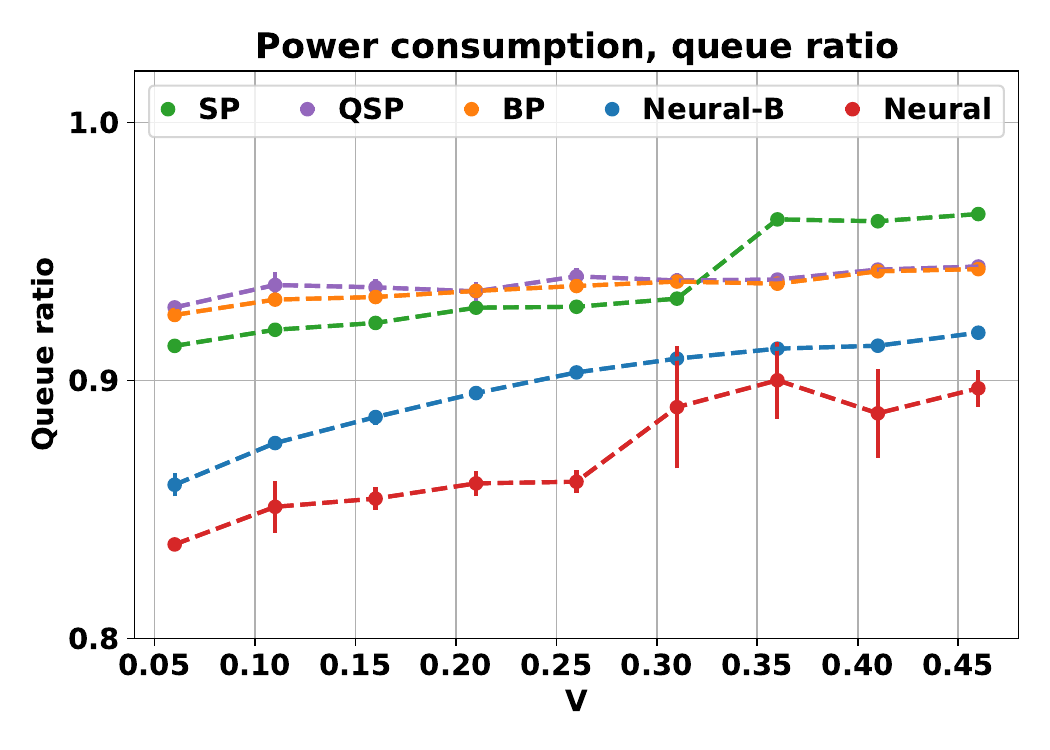}
    \end{subfigure}
    \caption{Average penalty and queue ratios for the power consumption and energy efficiency penalties. All algorithms use the Sinkhorn scheduling method with $\eta = 0.5$. Error bars show the standard error for each reading. }
    \label{fig:ee_pow}
\end{figure}

\subsection{Out-of-distribution results}
In practice, the external data rate $\lambda$ is often unknown at training time, meaning the rates seen during deployment may differ from those used in training. To test robustness under such distribution shifts, we trained no penalty DPP algorithms at a fixed rate $\lambda_0 = 0.25$ and evaluated them at test rates $\lambda_0 \in [0.25, 0.95]$ (Figure~\ref{fig:ood_tvn}).

Although the Neural backlog doesn't ensure stable systems, it consistently achieves lower queue ratios than all other backlogs considered. Taken together with our other results, this supports the consensus that, while stability guarantees are theoretically appealing, they do not always translate into improved empirical performance \cite{empirical_bp, gnn_BP}.

\begin{figure}
    \centering \hspace{-4mm}
    \begin{subfigure}[t]{0.52\linewidth}
        \includegraphics[width=\linewidth]{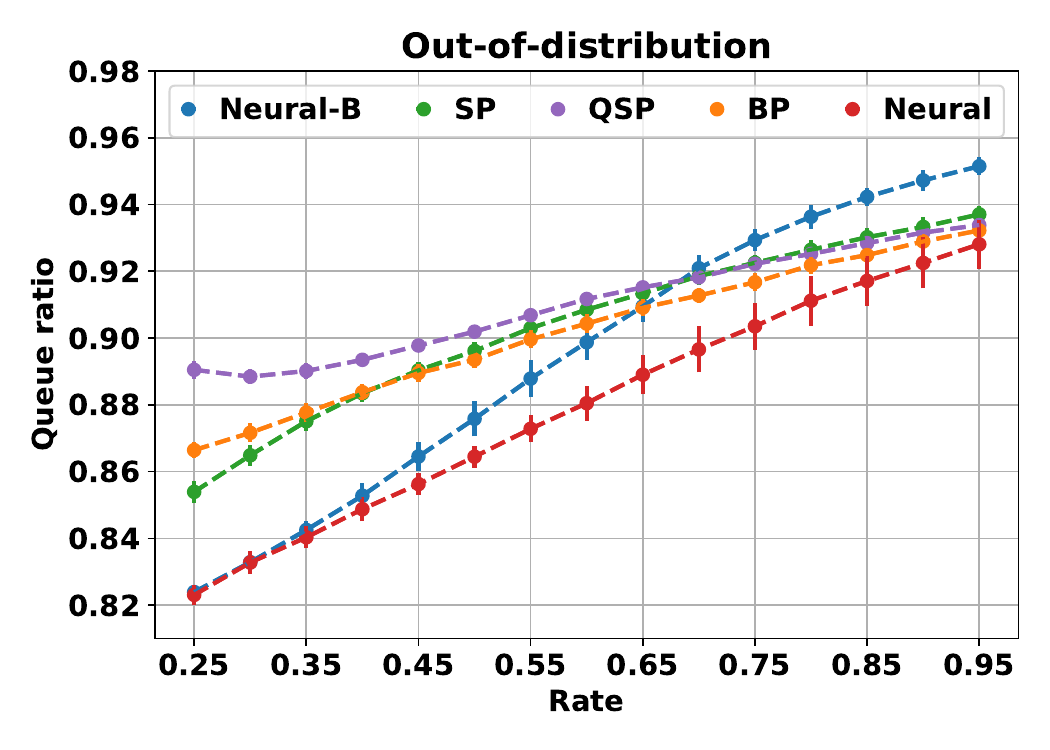}
    \end{subfigure}\hspace{-1mm}%
    \begin{subfigure}[t]{0.52\linewidth}
        \includegraphics[width=\linewidth]{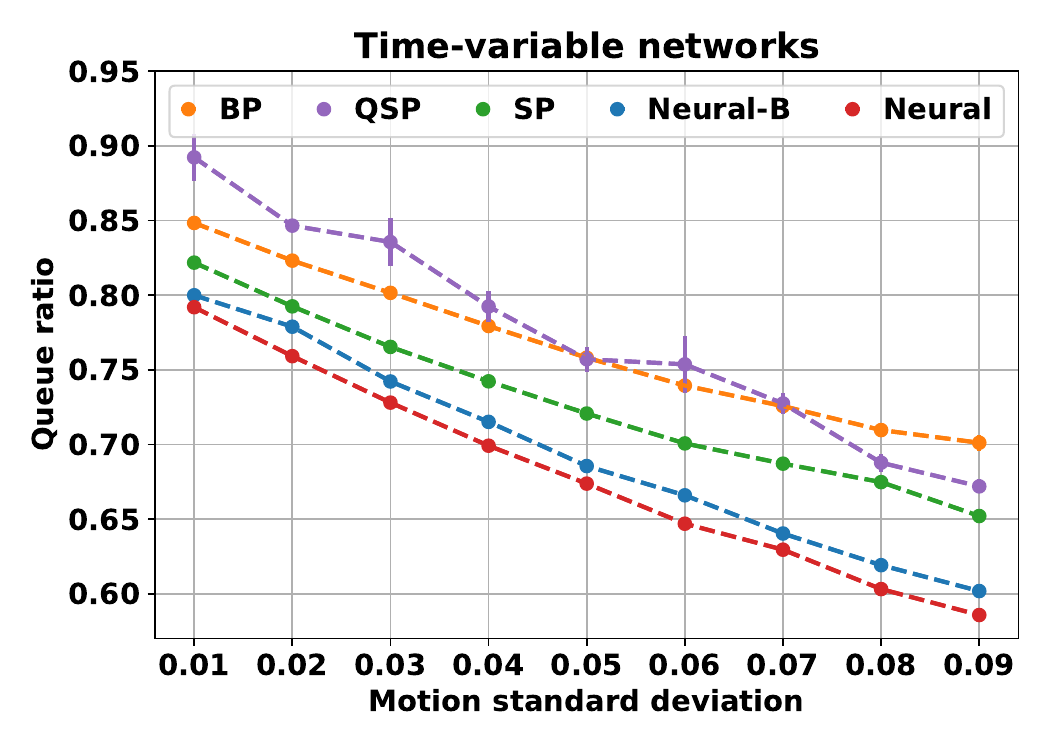}
    \end{subfigure}
    \caption{Results for out-of-distribution and time-variable network settings with no penlty function. All algorithms use the Sinkhorn scheduling method with $\eta = 0.5$. Error bars show the standard error for each reading. }
    \label{fig:ood_tvn}
\end{figure}

\subsection{Time-variable networks}
Thus far, we have assumed a static network topology.  However, if the radio transceivers attached to each node are mobile, then both the channel quality $h$ and the network links $L$ would vary over time. It can be seen that DPP algorithms adapt naturally to this setting without requiring any modifications. To model mobility, we perturbed the positions nodes in a geometric graph according to $$x_{ij}(t+1) = \min\{(x_{ij}(t) + \nu)^+, 1\},$$ where $\nu$ is normally distributed with zero mean. The network links and channel states are then updated based on these perturbed positions.

Figure~\ref{fig:ood_tvn} reports the effect of the standard deviation of $\nu$ on queue ratio across the different backlog methods. While mobility improves performance for all backlogs, neural backlogs benefit the most, with the performance gap relative to the other methods widening as mobility increases.

\section{Conclusion} \label{sec:conc}
We have presented a learned variant of the drift-plus-penalty algorithm that jointly optimizes link transmit power and multi-hop data routing. Our approach integrates two key innovations: a neural backlog function, and an optimal transport–based method for link scheduling. Experiments in a diverse set of network scenarios demonstrated that our method consistently outperforms both classical and existing ML-based alternatives.

\printbibliography
\newpage
\appendix

\subsection{Proof of Theorem \ref{thrm:stability}} \label{apndx:stable_proof}
We define an $S-$only routing algorithm to be any algorithm which determines link power $P(t)$ and transmit data $\mu(t)$ independently of $Q(t)$. If such an algorithm violates the queue inequality, we may assume that excess transmissions are redundant. Recall that $$ \delta_{i,c}(t) = \sum_j\mu_{j,i,c}(t) - \sum_j\mu_{i, j, c}(t).$$ When $\lambda \in \text{int}(\Lambda)$, we assume that there exists an optimal $S-$only algorithm which achieves the minimum penalty $p^*$ and satisfies $\mathbb{E}[\delta_{i,c} + \lambda_{i,c}] < 0$ for $i \neq c$. It can be shown that this assumption is not needed in the no penalty case \cite{RA_book}.

First, we state a fundamental result of Lyapunov optimisation; see \cite{lyapnunov_prob1_stabiliy} for proof.

\begin{lemma} \label{lemma:drift}
    let $\Delta(t) = \frac{1}{2}(||Q(t+1)||_2^2 - ||Q(t)||_2^2)$ denote the Lyapunov drift for a system $\{Q(t)\}_t$ of queues. The queue system is stable if there exists constants $B \in \mathbb{R}$ and $\theta \in \mathbb{R}^{n \times m}_{++}$ such that 
    $$
    \mathbb{E}[\Delta(t) | Q(t)] \leq B - \theta \cdot Q(t)
    $$
    for all $t$.
\end{lemma}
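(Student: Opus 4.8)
The statement is the standard discrete-time Foster--Lyapunov (supermartingale) drift criterion, so the plan is to convert the pointwise one-step bound into control on the growth of $\mathbb{E}[\|Q(t)\|_2^2]$ by telescoping, and then translate that $\ell_2$ statement into the $\ell_1$ rate that appears in the definition of stability. The only probabilistic ingredient is the tower property of conditional expectation; everything else is summation and elementary norm inequalities, so I expect the argument to be short once the bookkeeping is set up.

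First I would remove the conditioning: taking total expectations in the hypothesis and using $\mathbb{E}[\Delta(t)] = \mathbb{E}\big[\mathbb{E}[\Delta(t)\mid Q(t)]\big]$ gives $\mathbb{E}[\Delta(t)] \leq B - \mathbb{E}[\theta\cdot Q(t)]$. Because $\Delta(t)$ is the one-step difference of $\tfrac{1}{2}\|Q\|_2^2$, summing over $t = 0,\dots,T-1$ telescopes the left-hand side, and with $Q(0)=0$ this yields $\tfrac{1}{2}\mathbb{E}[\|Q(T)\|_2^2] \leq TB - \sum_{t=0}^{T-1}\mathbb{E}[\theta\cdot Q(t)]$. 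This single inequality already contains everything needed: discarding the non-negative sum gives $\mathbb{E}[\|Q(T)\|_2^2] \leq 2TB$, while using instead the non-negativity of $\|Q(T)\|_2^2$ gives the Cesàro estimate $\sum_{t=0}^{T-1}\mathbb{E}[\theta\cdot Q(t)] \leq TB$.

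To reach the stated conclusion I would then convert from $\ell_2$ to $\ell_1$. Writing $\theta_{\min} = \min_{i,c}\theta_{i,c} > 0$ and using $Q(t)\geq 0$, we have $\theta\cdot Q(t) \geq \theta_{\min}\|Q(t)\|_1$, so the second bound yields the time-average estimate $\tfrac{1}{T}\sum_{t=0}^{T-1}\mathbb{E}[\|Q(t)\|_1] \leq B/\theta_{\min}$. For the instantaneous rate in the paper's definition of stability, $\limsup_{t\to\infty}\tfrac{1}{t}\mathbb{E}[\|Q(t)\|_1] < \infty$, I would instead apply Cauchy--Schwarz on the $nm$-dimensional flattening, $\|Q(T)\|_1 \leq \sqrt{nm}\,\|Q(T)\|_2$, followed by Jensen's inequality to move the expectation inside the square root, obtaining $\mathbb{E}[\|Q(T)\|_1] \leq \sqrt{nm}\,\sqrt{\mathbb{E}[\|Q(T)\|_2^2]} \leq \sqrt{2nmB\,T}$; dividing by $T$ and letting $T\to\infty$ sends $\tfrac{1}{T}\mathbb{E}[\|Q(T)\|_1]\to 0$, which is finite and hence establishes stability.

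The only routine grinding is the telescoping itself; two points need care. First, the telescoping and rearrangement are valid only if $\mathbb{E}[\|Q(t)\|_2^2]$ is finite at each finite $t$, which I would discharge by induction on $t$ using that the arrivals $A(t)$ have finite second moment and that per-slot departures are bounded through $\kappa \leq \kappa_\text{max}$, so each queue increment has a finite second moment. Second, and this is the main subtlety, one must keep track of which notion of stability is being targeted: the negative-drift term $-\theta\cdot Q(t)$ is essential for the time-average ($\ell_1$) bound but can be dropped for the weaker instantaneous-rate statement, which follows from the $\mathbb{E}[\|Q(T)\|_2^2]\leq 2TB$ branch alone. I would make explicit which of the two extracted inequalities supports which notion, so that the logical dependence on the hypothesis stays transparent.
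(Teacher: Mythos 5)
The paper does not actually prove Lemma~\ref{lemma:drift}; it states it as a known result and defers to the cited reference, so there is no in-paper argument to compare against. Your proposal is the standard telescoping proof of the Foster--Lyapunov criterion and it is correct: the tower property, the telescoping of $\tfrac12\|Q\|_2^2$, the two branches (dropping the negative-drift sum versus dropping $\|Q(T)\|_2^2$), and the $\ell_2$-to-$\ell_1$ conversion via Cauchy--Schwarz and Jensen are all sound, and you rightly flag the need for $\mathbb{E}[\|Q(t)\|_2^2]<\infty$ at each finite $t$ (discharged by induction from the finite second moments of $A(t)$ and $\kappa\le\kappa_{\max}$) before rearranging. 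Your observation that the literal stability criterion $\limsup_t \tfrac1t\mathbb{E}[\|Q(t)\|_1]<\infty$ already follows from the $\theta=0$ branch, while the negative-drift term is what delivers the stronger time-average bound $\tfrac1T\sum_{t<T}\mathbb{E}[\|Q(t)\|_1]\le B/\theta_{\min}$, is a correct and worthwhile clarification of which hypothesis supports which notion of stability.
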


\begin{proof}[Proof of Theorem~\ref{thrm:stability}]
    Define $\delta(t) \in \mathbb{R}^{n \times m}$ as the matrix with entries $\delta_{i,c}(t)$ and let $\bar{\delta}(t) = \delta(t) + \lambda$. Queues evolve according to \eqref{eq:evolve} if $i\neq c$ and are identically zero for $i=c$. Combining these relations we obtain
    $$ Q(t+1) \leq Q(t)+ \delta(t) + A(t).$$
    
    Using this, we compute the following bound on the expected drift:
    \begin{align*}
        &\mathbb{E}[\Delta(t) \mid Q(t)] \leq E+ Q(t) \cdot \mathbb{E}[\bar{\delta}(t) \mid Q(t)] \\
        &=E + \mathbb{E}[(Q(t) - U(t)) \cdot \bar{\delta}(t) +  U(t) \cdot \bar{\delta}(t) \mid Q(t)] \\
        &\leq E+ cB+ \mathbb{E}[U(t) \cdot \bar{\delta}(t) \mid Q(t)],
    \end{align*}

    where the first inequality is standard \cite{BP_paper}, with the constant $E$ depending on the second moment of external arrivals and the maximum link capacity $\kappa_\text{max}$, and $$c = 2\kappa_\text{max}|L| + ||\lambda||_1 \geq ||\bar{\delta}||_1.$$

    We may then compute the following bound on the expected drift-plus-penalty
    
    \begin{align*}
        &\mathbb{E}[\Delta(t) + Vp(t) \mid Q(t)] \\
        &\leq E+ cB+ \mathbb{E}[U(t) \cdot \bar{\delta}(t) + Vp(t)\mid Q(t)] \\
        &=E+ cB- \mathbb{E}[W(t) \cdot \mu(t) - Vp(t) - U(t)\cdot \lambda \mid Q(t) ].
    \end{align*}
    It can be seen that problem \eqref{eq:dpp} is equivalent to opportunistically minimising the above upper bound. Hence, at each time step, the DPP algorithm provides a minimizer of this bound among all routing algorithms.
    
    Assume $\lambda \in \text{int}(\Lambda)$ and that $\mathcal{D}(U,V)$ is used to control the queue system $\{Q(t)\}_t$. Let $(\bar{\delta}^\textbf{S}, p^\textbf{S})$ be the differential flows and penalty value due to the control decisions of the optimal $S$-only algorithm, we then have
    \begin{align} \label{eq:bound}
    \begin{split}
        &\mathbb{E}[\Delta(t) + Vp(t) \mid Q(t)] \\  
        &\leq E+ cB+ \mathbb{E}[ U(t) \cdot \bar{\delta}(t) + Vp(t) \mid Q(t) ] \\
        &\leq E+ cB+\mathbb{E}[U(t) \cdot \bar{\delta}^\textbf{S}(t) + Vp^\textbf{S}(t) \mid Q(t)] \\
        &\leq E+ 2cB+ Q(t) \cdot \mathbb{E}[\bar{\delta}^\textbf{S}(t)]+ V\mathbb{E}[p^\textbf{S}(t)] \\
        &\leq E+ 2cB-\theta \cdot Q(t) + Vp^*.
    \end{split}
    \end{align}
    for constants $\theta_{i, c} > 0$. The second inequality still holds if $\mathcal{D}(U, V)$ omits the queue inequality since the $S-$only algorithm also doesn't consider it.
    
    We then bound drift as $$\mathbb{E}[\Delta(t)|Q(t)] \leq E+ 2cB- \theta\cdot Q(t) +V(p^* -  p_\text{min}),$$
    by lemma \ref{lemma:drift}, it then holds that $\mathcal{D}(U,V)$ is throughput-optimal.
    
    To show that $\mathcal{D}(U,V)$ achieves the minimum penalty as $V \to \infty$, we average inequality \eqref{eq:bound} (marginalised over $Q(t)$) for the first $t$ time-steps:
    \begin{align*}
       &E + 2cB+ Vp^* > \frac{1}{t} \sum^{t-1}_{t' = 0} \mathbb{E}[\Delta(t') + Vp(t')] \\
        & = \frac{1}{t} \bigg[ \big(||L(t)||^2_2 - ||L(0)||^2_2\big)/2 + V\sum^{t-1}_{t'=0}\mathbb{E}[p(t')] \bigg] \\
        & \geq \frac{V}{t}\sum^{t-1}_{t'=0}\mathbb{E}[p(t')],
    \end{align*}
    which yields
    $$ \frac{1}{t}\sum^{t-1}_{t'=0}\mathbb{E}[p(t')] < p^* +  \frac{E + 2cB}{V},$$
    hence $\mathbb{E}[p] \to p^*$ as $V \to \infty$.
\end{proof}

We emphasise that $V$ contributes proportionally to the bound on drift and inversely to the bound on penalty, confirming the intuition that $V$ determines the trade-off between reducing queue size and penalty. Moreover,  $B$ increases the bounds on both the drift and penalty bounds, hence larger values of $B$ allow for a more flexible backlog function $U$ at the cost of robustness.

\subsection{Proof of Theorem \ref{thrm:OT}}\label{apndx:OT_proof}

We first state some properties regarding solutions of \eqref{eq:schedule}.

\begin{lemma} \label{lemma:mw_props}
    Let $\mu^\textbf{W}_i$ be a solution of problem \eqref{eq:schedule} and define
    \begin{align*}
        &E_- = {\{ (j, c) \in  V \times C  : W_{ijc} < 0 \}},\\
        &E_+ = {\{ (j, c) \in  V \times C : W_{ijc} > 0 \}}.
    \end{align*}
    Then the following properties hold:
    \begin{enumerate}
        \item $\mu^\textbf{W}_{i,j,c} = 0$ for all $(j, c) \in E_-$
        \item For all $(j,c) \in E_+$ at least one of the following two inequalities hold:
        \begin{align*}
            &\sum_{c'} \mu^\textbf{W}_{i,j,c'} = \kappa_{ij}(P, S), \ \text{or} \ \
            \sum_{j'} \mu^\textbf{W}_{i,j',c} = Q_{i,c}.
        \end{align*}
    \end{enumerate}
\end{lemma}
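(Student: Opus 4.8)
The plan is to derive both properties directly from the optimality of $\mu^\textbf{W}_i$ by constructing, in each case, a feasible perturbation that strictly improves the objective whenever the claimed property fails. Both arguments exploit the fact that a single entry $\mu_{i,j,c}$ appears in exactly three of the constraints of \eqref{eq:schedule}: the capacity constraint $\sum_{c'}\mu_{i,j,c'} \leq \kappa_{ij}$ of row $j$, the queue constraint $\sum_{j'}\mu_{i,j',c} \leq Q_{i,c}$ of column $c$, and its own nonnegativity bound. This locality is what makes the perturbations easy to keep feasible.

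For property (1) I would argue by contradiction. Suppose some $(j,c)\in E_-$ has $\mu^\textbf{W}_{i,j,c} > 0$, and form $\tilde\mu$ from $\mu^\textbf{W}_i$ by resetting this one entry to zero. Decreasing a single nonnegative entry can only relax the two inequality constraints it participates in, so $\tilde\mu$ remains feasible; meanwhile the objective changes by $-W_{ijc}\,\mu^\textbf{W}_{i,j,c} > 0$ since $W_{ijc}<0$. This contradicts the optimality of $\mu^\textbf{W}_i$, forcing $\mu^\textbf{W}_{i,j,c}=0$.

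For property (2) I would again argue by contradiction. Fix $(j,c)\in E_+$ and suppose both associated constraints are slack, that is $\sum_{c'}\mu^\textbf{W}_{i,j,c'} < \kappa_{ij}$ and $\sum_{j'}\mu^\textbf{W}_{i,j',c} < Q_{i,c}$. Then I would pick $\epsilon>0$ small enough that adding $\epsilon$ to the single entry $\mu_{i,j,c}$ keeps both sums strictly below their bounds; the resulting point stays feasible and raises the objective by $W_{ijc}\,\epsilon > 0$, again contradicting optimality. Hence at least one of the two constraints must hold with equality.

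I do not expect a genuine obstacle here: the statement is simply the complementary-slackness structure of the linear program \eqref{eq:schedule}, which I would establish elementarily rather than by invoking LP duality. The only point needing a line of care is confirming that a one-coordinate perturbation preserves feasibility, and this is immediate from the observation above that each $\mu_{i,j,c}$ couples to only one row constraint and one column constraint.
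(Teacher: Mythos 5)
Your proposal is correct and matches the paper's own argument: property (1) follows because zeroing a negatively-weighted entry preserves feasibility (only upper-bound constraints are present) and cannot decrease the objective, and property (2) follows by incrementing a positively-weighted entry when both its row and column constraints are slack, contradicting optimality. The paper's proof uses the explicit increment $\min\{\kappa_{ij}-\sum_{c'}\mu_{i,j,c'},\,Q_{i,c}-\sum_{j'}\mu_{i,j',c}\}$ where you use a small $\epsilon$, but this is an immaterial difference.
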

\begin{proof}
\begin{enumerate}
    \item This follows since, besides the non-negativity constraint, $\mu$ has only upper-bound constraints. Hence, it is always optimal to set $\mu_{i,j,c} = 0$ if $W_{i,j,c} < 0$.

    \item Assume that neither equality holds at some $(j,c) \in E_+$, it is then possible to increment $\mu^\textbf{W}_{ijc}$ by $$\min\left\{\kappa_{ij}(P,S) - \sum_{c'} \mu^\textbf{W}_{i,j,c'},\,  Q_{i,c} - \sum_{j'} \mu^\textbf{W}_{i,j',c} \right\} > 0,$$
    while retaining feasibility. Since the objective is strictly increasing in $\mu^\textbf{W}_{ijc}$, this contradicts the optimality of $\mu^\textbf{W}_i$.
\end{enumerate}
\end{proof}

\begin{proof}[Proof of Theorem~\ref{thrm:OT}]
    Since $\gamma,\; \sigma \geq 0$, it can be seen that $\mu^\textbf{W}_i$ is feasible in problem \eqref{eq:schedule}.
    
    We then proceed by contradiction. Suppose that $\mu^*_i$ is a solution of \eqref{eq:schedule} such that $$W_i \cdot \mu^*_i> W_i \cdot \mu^\textbf{W}_i. $$
    We show that $\mu^*_i$ can be extended to a feasible solution $(\mu'_i, \gamma', \sigma')$ of problem \eqref{eq:transport} such that $$W_i^+ \cdot \mu'_i > W^+_i \cdot \mu^\textbf{T}_i,$$ contradicting the optimality of $\mu^\textbf{T}_i$.

    To construct $(\mu', \gamma', \sigma')$, we bundle all three variables into one matrix $M \in \mathbb{R}^{(n + 1) \times (c + 1)}_{+}$:
    \begin{equation}
        M =
        \begin{bNiceArray}{cw{c}{1cm}c|c}[margin]
            \Block{3-3}<\large> {\mu'_i } & & & \Block{3-1}{\gamma'} \\
            & & &  \\
            & & &  \\
            \hline
            \Block{1-3} {(\sigma')^T} & & & \epsilon
        \end{bNiceArray},
    \end{equation}
    where $\epsilon$ is a dummy variable.
    
    The constraints in \eqref{eq:transport} may be interpreted as targets on the row and column sums of $M$: $$M \mathds1 = T_i, \quad M^T \mathds1 = I_i.$$
    These constraints are feasible since the sum of all row and column targets are equal
    \begin{equation*}
        \sum_j \kappa_{i,j} + (q_i - s_i)^+ = \sum_c Q_{i,c}(t) + (s_i - q_i)^+.
    \end{equation*}
    We set entries of $M$, or equivalently $(\mu'_i, \gamma', \sigma')$, in three steps:
    \begin{enumerate}
        \item  First, we set $\mu'_{i,j,c} = \mu^*_{i,j,c}$ for all $(j,c) \in E_+$. By Lemma \ref{lemma:mw_props}, each element $(j,c) \in E_+$ saturates either the row or column constraint (or both) at $M_{j,c}$.
        \item We set $M_{ij} = 0$ for all elements where row $i$ or column $j$ is saturated.
        \item Let $M'$ be the residual submatrix obtained after removing saturated rows and columns from M. Its row/column sum constraints are corresponding subvectors of $T_i, I_i$. Choose any feasible $M'$ that satisfies these targets (such an $M'$ exists by the feasibility balance above).
    \end{enumerate}
    
    Since $M$ then satisfies the row and column sum constraints, the corresponding values of $(\mu'_i, \gamma', \sigma')$ are feasible in \eqref{eq:transport}. From lemma \ref{lemma:mw_props}, it follows that $W_i \cdot \mu^*_i= W^+_i \cdot \mu^*_i$. By construction, we have
    $$W^+_i \cdot \mu'_i = W^+_i \cdot \mu^*_i  >W^+_i \mu^\textbf{W}_i  = W^+_i \mu^\textbf{T}_i,$$ contradicting the optimality of $\mu^\mathbf{T}_i$.

\end{proof}

\subsection{Sinkhorn implementation} \label{appendix:sink_impl}
Algorithm \ref{alg:dpp} showns that Sinkhorn's algorithm is applied sequentially to each node $i \in V$. As noted in the caption, this approach is inefficient and presented only for clarity. We now describe a more efficient way to multiplex Sinkhorn computations. Since the input matrices $M_i$ generally have different dimensions, they cannot be simply batched along a third axis. Instead, for inputs $M_1, \hdots, M_n$ we construct a block-diagonal matri
\begin{equation}
    M  = 
    \begin{bmatrix}
        M_1 & & \\
        & \ddots & \\
        & & M_n
    \end{bmatrix}
\end{equation}
and concatenate the row and column sum targets $I_i, T_i$ accordingly. This allows the for-loop in Algorithm \ref{alg:dpp} to be replaced by a batched computation of $(\bar{W}, T, I)$ followed by a single call to sparse Sinkhorn on $M$.

We then implement a GPU-accelerated, sparse variant of Sinkhorn's algorithm. Our implementation leverages the scattering operation \cite{GPU_scatter} $$(\textbf{scatter}(V,I))_i = \sum_{j:I_j = i}V_j.$$ which has efficient GPU implementations in most common ML-frameworks \cite{tensorflow, pytorch}. To improve numerical stability, we also use a log-domain implementation of Sinkhorn. Algorithm \ref{alg:sinkhorn} includes psudo-code for our implementation
\begin{algorithm}
    \caption{Sparse Sinkhorn algorithm. Sparse matrices are represented in coordinate list format using value, row, and column arrays distinguished by subscripts.}
    \label{alg:sinkhorn}
    \begin{algorithmic}
        \State $\textbf{Input}:\  M_v, M_r, M_c, I, T$
        \State $\textbf{Output}: \ N_v$ 
        \State $l_r \gets \log(t_r)$
        \State $l_c\gets \log(t_c)$
        \While{not converged} 
            \State $ \text{sum}_c \gets \log(\textbf{scatter}(\exp(M_v), M_c))$
            \State $M_v \gets M_v + (I - \text{sum}_c)[M_c]$
            \State $ \text{sum}_r \gets \log(\textbf{scatter}(\exp(M_v), M_r))$
            \State $M_v \gets M_v + (T - \text{sum}_r)[M_r]$
        \EndWhile
        \State $N_v \gets \exp(M_v)$
    \end{algorithmic}
\end{algorithm}

\subsection{Max-weight implementation} \label{appendix:mw_impl}
The Max-weight schedule \eqref{eq:mw} cannot be used to train neural backlogs, as it yields no useful gradients with respect to the weights $W$. Moreover, it can transmit more data of a certain commodity than what is available in the queue, which produces some ambiguity as to how schedule ties are broken. To address both issues, we adopt a softmax-based scheme that guarantees the queue constraint in \eqref{eq:bound} is always respected:

\begin{equation*}
    \mu_{i,j,c} = \psi_{i,j,c} \mathbf{1}_{ \{c = \arg \max_{c'} W_{i,j,c'} \} },
\end{equation*}
where $$ \psi_{i,j,c} = \min \{\softmax(W_{i,:,c})_{j}\, Q_{i,c},\; \kappa_{i,j}(P,S) \}.$$
While this rule does not necessarily solve \eqref{eq:schedule} optimally, it is guaranteed to remain feasible, unlike \eqref{eq:mw}.

\end{document}